\newtheorem{definition}{Definition}
\newtheorem{theorem}{Theorem}
\newtheorem{lemma}{Lemma}
\newtheorem{corollary}{Corollary}
\newtheorem{proposition}{Proposition}
\newenvironment{proof}{\parindent=0pt{\bf Proof: }}{
   \hspace*{\fill}\hbox to 6pt{\leaders\hrule width 6pt height 6pt\hfill}\par}
\long\def\remove#1{}
\newcommand{\pbox}{\hbox to 6pt{\leaders\hrule width 6pt height 6pt\hfill}}
\title{NP-Logic Systems and Model-Equivalence Reductions
\thanks{Research was partially supported by the NSFC projects under
grant No. 60970040}}
\author{Yuping Shen
\institute{
Institute of Logic and Cognition\\ Sun Yat-Sen University
\\ 510275 Guangzhou, P. R. China}
 \email{shyping@mail.sysu.edu.cn}
 \and
Xishun Zhao
\institute{
Institute of Logic and Cognition\\ Sun Yat-Sen University
\\ 510275 Guangzhou, P. R. China}
\email{hsszxs@mail.sysu.edu.cn}\\
}
\begin{document}
\maketitle

\maketitle

\begin{abstract}
In this paper we investigate the existence of model-equivalence
reduction between NP-logic systems which are logic systems with
\emph{model existence} problem in NP. It is shown that among all
NP-systems with \emph{model checking} problem in NP, the
existentially quantified propositional logic ($\exists$PF) is
maximal with respect to poly-time model-equivalent reduction.
However, $\exists$PF seems not a maximal NP-system in general
because there exits a NP-system with model checking problem
$D^P$-complete.
\end{abstract}


\section{Introduction}

For a complexity class ${\cal C}$, there are many logic systems for
which the \emph{model existence} problem  (i.e. the satisfiability
problem) lies in ${\cal C}$. We call such systems ${\cal
C}$-systems. Take NP as an example, the following logic systems are
all NP-systems:

\begin{itemize}
\item PF, the class of propositional formulas,
\item CNF, the class of propositional formulas in conjunctive normal form,
\item $k$CNF, the class  of CNF-formulas in which each clauses contains at
most $k$ literals, where $k\geq 3$.
\item LP, the class of normal logic programs with answer set
semantics \cite{gelfond}.


\item $\exists$PF, the class of quantified Boolean formulas with
only existential quantifiers.
\end{itemize}

Among the above systems we have the following observations:

\begin{itemize}
\item All systems in $\{$PF, CNF, LP, $\exists$PF$\}$ have the
same expressive power w.r.t. equivalence. More precisely, for any
two systems ${\cal S}_1, {\cal S}_2\in$ $\{$PF, CNF, LP,
$\exists$PF$\}$, there is a transformation which translates every
formula in ${\cal S}_1$ to a formula in ${\cal S}_2$ such that the
two formulas are equivalent (i.e., they have the same models).

\item From $(k+1)$CNF to $k$CNF there is no transformation which
preserves the equivalence. Further, CNF has strictly stronger
expressive power than $k$CNF (see e.g. \cite{kble}).


\item From PF to CNF there is no poly-space transformation which
preserves the equivalence (see e.g. \cite{kble}).

\item Under the conjecture $\mbox{P}\not\subseteq
{\mbox{NC}^1}/{\mbox{poly}}$ (see \cite{papa}), there is no
poly-space transformation from LP to PF which preserves the
equivalence \cite{lifraz}.
\end{itemize}

From the above we can see that the expressive power of logic systems
in the same complexity class are quite different. Since poly-space
transformations preserving equivalence do not exist between some
NP-systems, it is quite natural to investigate the existence of
(poly-time or poly-space) reductions which only preserve some
relaxed equivalence.  One of such reductions called model-equivalent
reduction was introduced by Xishun Zhao and Kleine B\"{u}ning in
\cite{zhkb}. Informally speaking, a system ${\cal S}$ can be
model-equivalently reduced to ${\cal S}'$ if every formula $F$ in
${\cal S}$ can be transformed into a formula $F'$ in ${\cal S}'$
such that there is a poly-time computable one-to-one correspondence
between the models of $F$ and $F'$. With respect to poly-time
model-equivalent reduction, systems PF, CNF, 3CNF, LP have the same
expressive power (see \cite{zhkb} \cite{linzhao}). However,
$\exists$PF still has strictly stronger expressive power than PF
under the conjecture that NP$\not\subseteq$P/poly which is widely
believed true. That is, there seems no even poly-space
model-equivalent reduction from $\exists$PF to PF. So, the authors
of \cite{zhkb} asked whether $\exists$PF is a maximal NP-system
w.r.t. poly-time model-equivalent reduction. This paper is concerned
with this question. The remainder is organized as follows. In
section 2, for the convenience of proof of our main result, we give
a general but formal definition of logic systems. After listing some
examples of logic systems, we reformulate the definition of
model-equivalent reduction. In section 3, the main results are
proved. More precisely, we prove the following: Any NP-system with
model checking problem in NP can be poly-time
model-equivalently to $\exists$PF, 
The \emph{model checking} problem is to decide whether a given formula
is satisfied by a given interpretation. However, there do exist a
NP-system for which the model checking problem is co-NP-complete or
even harder. Then we show that there is a NP-system which is
incomparable with $\exists$PF, and that there is a NP-system with
strictly stronger expressive power than $\exists$PF under a
conjecture in complexity theory.

\section{Logic System and Model-equivalent Reduction}

Roughly speaking, a logic system consists of three parts, the
language which is usually identified with the class of formulas
constructed from the symbols in the language, the semantics which
consists of all possible interpretations of symbols in the language,
and the deductive relation. However, for our purpose we adopt the
following formal definition.

\begin{definition} \label{d-1} A logic system is a tuple $(\Gamma,
\Delta, T, S, R)$ satisfying the following conditions:
\begin{itemize}
\item $\Gamma, \Delta$ are non-empty finite sets of symbols, and $\Gamma\cap\Delta=\emptyset$,

\item ${T}\subseteq \Gamma^*$, $S\subseteq \Delta^*$, both are
poly-time decidable, and

\item $R\subseteq \Gamma^*\times \Delta^*$ is a binary relation.  
\end{itemize}

\noindent For a string $t\in {T}$, and a string $w\in S$, we say $w$
is a $R$-model of $t$ if $R(t,w)$ holds. The set of all $R$-models
of $t$ is denoted as Mod$_R(t)$

\end{definition}

Intuitively,  one may regard strings in $T$ as (encodings of) finite
theories (e.g., a propositional formula, or a logic program, etc.),
whereas strings in $S$ are intended to encode interpretations of
atoms. Then the predicate $R(t,w)$ says the interpretation encoded
by $w$ satisfies the theory encoded by $t$. That is, $R$ is a
satisfactory relation. Please note that from the satisfactory
relation $R$ we can define the following deductive relation: we say
$t_1$ entails $t_2$ if Mod$_R(t_1)\subseteq$ Mod$_R(t_2)$.

\vskip 2mm \noindent {\bf Example 1.} Let ${\cal L}:=\{x, |, \neg,
\wedge, \vee, \rightarrow, \exists,\forall, ), (\}$,
$\Delta=\{0,1\}$. We intend to use $x|, x||, x|||, \cdots$ to denote
the propositional variables $x_1, x_2, x_3, \cdots$. A string $t$ of
${\cal L}^*$ is an encoding of a propositional formula over
variables $x_1,\cdots,x_n$ if $t$ can be obtained from the formula
by replacing each occurrence of each $x_i$ by $x|\cdots|$ ($x$ is
followed by $i$ many $|$'s). Similarly, logic programs, quantified
Boolean formulas can be encoded in a natural way as strings in
${\cal L}^*$.

Further, $\{0,1\}$-sequences $w$ with length $n$ are intended to
code truth assignments $v$ on variables $x_1, x_2,\cdots, x_n$, more
precisely, $v(x_i)=1$ if and only if the $i$-th symbol in $w$ is 1.
Please note that a truth assignment is uniquely determined by a
subset of atoms and vice visa. Thus we can also consider a
$\{0,1\}$-sequence as an encoding of a  subset of atoms.

\begin{description}
\item (1) Let PF be the class of encodings of propositional
formulas, TA$:=\{0,1\}^*$, and Sat$(t,w)$ be the relation which says
that $w$ is an encoding of a satisfying truth assignment of the
formula coded by $t$. Then $({\cal L}$, $\{0,1\}$, PF, TA, Sat) is
in fact the propositional logic system.

\item (2) Let LP be the class of encodings of (propositional)
logic programs, ANS$(t,w)$ be the relation which says that $w$ is an
encoding of an answer set (see \cite{gelfond}) of the normal logic
program coded by $t$. Then ($\Gamma$, $\{0,1\}$, LP, TA, ANS) is the
answer set logic programming system.

\item (3) Let $\exists$PF be the class of encodings of formulas $\Phi$ of
the form $\exists x_1\cdots\exists x_m \varphi$ with $\varphi\in$ PF
(here free variables are allowed). And let FSat$(\Phi,w)$ be the
relation which says that $w$ is an encoding of a truth assignment
$v$ on free variables of $\Phi$, and after applying $v$ to $\Phi$
the resulting formula $\Phi[v]$ is true. Then $({\cal L}$,
$\{0,1\}$, $\exists$PF, TA, Fsat) is a logic system.
\end{description}

For simplicity, from now on we  write a logic system $(\Gamma,
\Delta, T, S, R)$ just as $(T, S, R)$. For example we will write
(PF, TA, Sat) instead of $({\cal L}, \{0,1\}$, PF, TA, Sat).

\begin{definition} \label{d-2} (\cite{zhkb}) \
Let $(T_1, S_1, R_1)$, $(T_2, S_2, R_2)$ be two logic systems. We
say $(T_1, S_1, R_1)$ can be {\bf poly-time model-equivalently
reduced to} $(T_2, S_2, R_2)$, denoted as $(T_1, S_1,
R_1)\preceq_{ptime} (T_2, S_2, R_2)$, if there are two polynomials
$p(n)$ and $q(n)$, a function $f : {T_1} \longrightarrow {T_2}$, and
a  mapping $g : T_1\times S_1 \longrightarrow S_2 $
satisfying

\begin{itemize}
\item $f$ is computable in time $p(n)$, where $n$ is the size of
input theory $t$,
\item $g$ is computable in time $q(n)$, where $n$ is the size of
input $(t,w)$, and
\item for any fixed $t\in T_1$, the mapping $g_t$, defined by
$g_t(v):=g(t,v)$, is a bijection from Mod$_{R_1}(t)$ to
Mod$_{R_2}(f(t))$.
\end{itemize}

If in the above definition of $\preceq_{ptime}$ we replace ``$f$ is
computable in time $p(n)$" by ``$f$ is computable in space $p(n)$",
then the reduction is called poly-space model-equivalent reduction,
denoted as $\preceq_{pspace}$.

If the mapping $g$ in the definition of $\preceq_{ptime}$ satisfies
$g(t,v)=v$ for all $t\in {T}_1$ and $v\in S_1$, i.e., $t$ and $f(t)$
are equivalent, then the reduction is called poly-time equivalent
reduction, denoted as $\preceq^{equ}_{ptime}$. Likewise for
poly-space equivalent reduction $\preceq^{equ}_{pspace}$.

\end{definition}

Clearly, $\preceq_{ptime}$ and $\preceq_{pspace}$ are transitive.
And a poly-time model-equivalent reduction is also a poly-space
model-equivalent reduction.

\begin{lemma}\label{l-1} \ \
\begin{description}
\item (1) (PF, TA, Sat) $\preceq_{ptime}$ (CNF, TA, Sat).
\cite{plgr} 

\item (2) Suppose P$\not\subseteq$NC$^1$/poly, then
(LP, TA, ANS) $\not\preceq_{pspace}^{equ}$ (PF, TA,
Sat). \cite{lifraz}

\item (3) Suppose NP$\not\subseteq$P/poly, then ($\exists$PF, TA, FSat) $\not\preceq_{pspace}$ (PF, TA, Sat). \cite{zhkb}
\end{description}
\end{lemma}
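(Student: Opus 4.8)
The three parts are independent results drawn from prior work, so the plan is to reconstruct the standard argument behind each. For part (1) I would use the \emph{structure-preserving} (definitional) clause form. Given an encoded propositional formula $t$ over $x_1,\dots,x_n$, introduce for every subformula occurrence $\psi$ a fresh label variable $y_\psi$ and emit the defining clauses expressing $y_\psi \leftrightarrow \mathrm{op}(y_{\psi_1},\dots)$ together with the unit clause asserting the label of the whole formula; this is computable in linear time and supplies $f$. The map $g_t$ is the extension map: each satisfying assignment $v$ of $t$ extends \emph{uniquely} to the labels by setting $y_\psi$ to the truth value of $\psi$ under $v$, and this extension is exactly the unique model of $f(t)$ lying above $v$, while restriction sends every model of $f(t)$ back to a model of $t$. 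Verifying that extension and restriction are mutually inverse and poly-time computable closes this part.

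For part (2) the strategy is an \emph{evaluation gap} combined with nonuniformity. Fix a P-complete language $L$ and build a family of normal logic programs, with designated input atoms, whose answer sets faithfully encode the polynomial-time computation so that a distinguished atom belongs to the answer set over a given input exactly when that input lies in $L$; such faithful simulations of deterministic poly-time machines by normal programs are standard. Assume for contradiction $(\mathrm{LP},\mathrm{TA},\mathrm{ANS}) \preceq^{equ}_{pspace} (\mathrm{PF},\mathrm{TA},\mathrm{Sat})$. Because $g$ is the identity here, $f(t)$ has exactly the same models as $t$, and the poly-space bound forces the target formula to have polynomial size. Using these formulas as advice and invoking Buss's theorem that the propositional formula-value problem lies in NC$^1$, one evaluates the translated formula to recover the decision bit, placing $L$ in NC$^1$/poly and contradicting P$\not\subseteq$NC$^1$/poly.

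For part (3) the plan is to compress SAT into a single polynomial-size $\exists$PF formula per input length and then read a P/poly algorithm off the reduction. Using the Cook--Levin evaluation predicate I take $\Phi_n = \exists \vec{y}\,\mathrm{Eval}_n(\vec{z},\vec{y})$, whose free variables $\vec{z}$ range over codes of size-$n$ formulas and whose models are precisely the codes of \emph{satisfiable} formulas, so that $\mathrm{FSat}$-checking of $\Phi_n$ is SAT itself. Assuming $(\exists\mathrm{PF},\mathrm{TA},\mathrm{FSat}) \preceq_{pspace} (\mathrm{PF},\mathrm{TA},\mathrm{Sat})$ via $(f,g)$, the poly-space bound makes $f(\Phi_n)$ polynomially large while $g$ stays poly-time by definition. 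I would then hard-wire $f(\Phi_n)$ as advice: on a formula coded by $w$, compute $u=g(\Phi_n,w)$ in polynomial time and accept iff $\mathrm{Sat}(f(\Phi_n),u)$ holds. By the model bijection every satisfiable instance is accepted, so the circuit decides (a superset of) SAT; forcing this to be \emph{exactly} SAT yields SAT in P/poly and contradicts NP$\not\subseteq$P/poly.

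The genuinely delicate steps sit in parts (2) and (3), and both concern transferring a \emph{decision bit} across the reduction rather than merely a model. In (2) the obstacle is that a small equivalent formula lets one \emph{verify} a candidate model in NC$^1$ but not obviously \emph{compute} the unique answer set; overcoming it requires engineering the program (exploiting its definite, monotone layering) so that each trace bit, and hence the acceptance atom, is itself an NC$^1$ function of the input with the formula as advice. In (3) the analogous obstacle is that $g$ is constrained to be a bijection only \emph{on} $\mathrm{Mod}(\Phi_n)$, so a priori an unsatisfiable code $w$ might be mapped by $g$ into $\mathrm{Mod}(f(\Phi_n))$ and cause a false accept; I would remove this by arranging the encoding so that ``no''-instances induce a target formula with no models at all---e.g.\ via the closed-formula variant $\exists\vec{y}\,\varphi_\psi(\vec{y})$, whose model set is empty exactly when $\psi$ is unsatisfiable---thereby making the image test exact. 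This exactness requirement, and its reconciliation with obtaining \emph{length-only} advice, is the crux of the argument.
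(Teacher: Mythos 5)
The paper never proves Lemma~\ref{l-1}: all three items are imported verbatim from \cite{plgr}, \cite{lifraz} and \cite{zhkb}, so your reconstructions can only be compared with the cited arguments and with the closest in-paper analogue, namely the advice-plus-model-checking technique in the proofs of Theorems 2 and 3. Your part (1) is correct and is the intended argument: the definitional clause-form translation, with \emph{biconditional} definitions $y_\psi \leftrightarrow \mathrm{op}(y_{\psi_1},\dots)$, makes the extension of each model of $t$ to the label variables unique, which is exactly what the bijection $g_t$ requires. (This choice matters: the polarity-optimized variant of \cite{plgr}, which keeps only one direction of each definition, preserves satisfiability but not the one-to-one correspondence, since a model of $t$ may then extend in several ways.)

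Part (2) has a genuine gap at the step you yourself flag as delicate. Your plan is to recover the \emph{decision bit} of a P-complete computation by ``engineering the program \dots so that each trace bit, and hence the acceptance atom, is itself an NC$^1$ function of the input with the formula as advice.'' No mechanism is given for this, and none can be given this way: the advice formula can only be \emph{evaluated} on a complete interpretation, and it offers no handle for \emph{computing} an individual trace bit; indeed, ``the acceptance atom of a P-complete computation is NC$^1$/poly-computable'' is precisely the statement P $\subseteq$ NC$^1$/poly that the contradiction is supposed to deliver, so asserting it as an engineering step is circular. The missing idea -- which is how \cite{lifraz} argue, and which is the same trick this paper uses for Theorem 3 -- is to pick the P-complete problem so that it is itself a \emph{model-checking} question with a trivially computable candidate interpretation. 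Concretely: take a generic path-system (monotone, definite) program $\Pi_n$ whose choice atoms encode the axioms and rules of an instance, pad the instance so that the goal node derives every node, and ask whether the interpretation making \emph{all} derived-atoms true is an answer set of $\Pi_n$. That candidate interpretation is computable by copying bits; and because the reduction in item (2) is $\preceq^{equ}_{pspace}$, the map $g$ is the identity, so ``is this an answer set of $\Pi_n$'' is \emph{exactly} ``does it satisfy the advice formula $f(\Pi_n)$,'' which Buss's theorem places in NC$^1$/poly. With that replacement your paragraph closes; without it, it does not.

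Part (3) also stops at its crux. Your construction of $\Phi_n$ and the advice $f(\Phi_n)$ is the right one (it is the $\Psi_n$-device of \cite{zhkb}, reused in this paper's Theorem 3), and you are right -- indeed more careful than the paper itself, whose Theorems 2 and 3 silently perform the same naive test -- that under Definition~\ref{d-2} the test ``accept iff $g(\Phi_n,w)$ satisfies $f(\Phi_n)$'' is only sound in one direction: for an unsatisfiable instance, $w$ is not a model of $\Phi_n$, yet $g(\Phi_n,w)$ may collide with the image of a genuine model and land in the model set of $f(\Phi_n)$. But your proposed repair, the per-instance closed formulas $\exists\vec y\,\varphi_\psi(\vec y)$, is unusable for the reason you yourself concede: the advice may depend only on the input length, whereas $f(\exists\vec y\,\varphi_\psi(\vec y))$ depends on $\psi$. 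So the argument is left incomplete exactly where its content lies. What is actually needed is that the correspondence transfers \emph{non}-models to non-models -- e.g.\ reading the reduction as a poly-time bijection of the full assignment spaces that restricts to a bijection of the model sets, which is the reading under which the cited result of \cite{zhkb} goes through -- or else some additional argument ruling out the collisions; with Definition~\ref{d-2} taken literally and only the test you describe, SAT $\in$ P/poly does not follow.
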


\section{NP-Logic System}

For a logic system $(T, S, R)$, a theory $t\in T$ may have
$R$-models with super-polynomial size in the size of $t$. However,
in this paper we concentrate on systems such that any $R$-model of
every theory has polynomial size.

\begin{definition} \label{d-3} Suppose $(\Gamma, \Delta, T, S, R)$
is a logic system. If there is a polynomial $p$ such that $R(t,w)$
implies $|w|= p(|t|)$ for any $t\in {T}$ and any $w\in S$, then we
call $(\Gamma, \Delta, T, S, R)$ a poly-size system.
\end{definition}

Obviously, all systems in Example 1 are poly-size systems. From now
on, whenever speaking of a logic system we mean it is a poly-size
system.

\begin{definition}
A logic system $(T,S,R)$ is said to be a NP-logic system if the model-existence problem is in NP, i.e.,
the problem whether a given formula in $T$ has a $R$-model can be decided non-deterministically in polynomial time.
\end{definition}

Obviously, for a logic system $(T,S,R)$, if the model-checking problem is in NP (i.e. $R$ is in NP) then the system is an NP-system.
Therefore, all logic systems in Example 1 are NP-systems.

\begin{theorem}
\begin{description}
\item (1) For any logic system $(\Gamma, \Delta, T, S, R)$, if the
relation $R$ is decidable in polynomial time on a non-deterministic
Turing machine, then $(T, S, R) \preceq_{ptime}$ ($\exists$PF, TA,
FSat)

\item (2) For any logic system $(\Gamma, \Delta, T, S, R)$, if the
relation $R$ is decidable in polynomial time on a deterministic
Turing machine, then $(T, S, R) \preceq_{ptime}$ (PF, TA, Sat).
\end{description}
\end{theorem}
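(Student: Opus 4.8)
The plan is to encode the computation of the Turing machine deciding $R$ into a propositional formula by a Cook--Levin style tableau construction, letting the \emph{free} variables carry the candidate model $w$ and auxiliary variables carry the computation. The two parts then differ only in how the auxiliary variables are handled: part (1) hides them under existential quantifiers, whereas part (2) exploits determinism to make them uniquely determined.

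First I would fix the machine. By hypothesis $R$ is decided in polynomial time by a machine $M$ (nondeterministic in case (1), deterministic in case (2)); since $S$ is poly-time decidable, I may assume $M$ additionally checks $w\in S$, so that $M$ accepts $(t,w)$ exactly when $w\in \mathrm{Mod}_R(t)$. Because the system is poly-size, every $R$-model of $t$ has length exactly $m=p(|t|)$, so on the relevant inputs $M$ runs within a polynomial in $|t|$. Fixing an encoding of each symbol of $\Delta$ by $\ell=\lceil\log_2|\Delta|\rceil$ bits, a candidate $w$ is represented by $m\ell$ Boolean variables $z$, and I add a formula $\mathrm{Valid}(z)$ asserting that each block of $\ell$ bits codes a legal symbol.

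Next I carry out the tableau construction for fixed $t$. This produces, in time polynomial in $|t|$, a propositional formula $\psi_t(z,y)$ over the model-variables $z$ and auxiliary variables $y$ (coding the cells of the computation table together with the nondeterministic choices), such that for every assignment to $z$ coding a string $w$ there is a satisfying assignment to $y$ iff $M$ has an accepting run on $(t,w)$. For part (1) I set $f(t):=\exists y\,(\mathrm{Valid}(z)\wedge\psi_t(z,y))$, an $\exists$PF formula whose free variables are exactly $z$; its models are precisely the binary codes of members of $\mathrm{Mod}_R(t)$, and $g(t,w):=z_w$, the code of $w$, is a poly-time bijection from $\mathrm{Mod}_R(t)$ onto $\mathrm{Mod}_{\mathrm{FSat}}(f(t))$.

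For part (2) the machine is deterministic, so I keep the quantifier-free formula $f(t):=\mathrm{Valid}(z)\wedge\psi_t(z,y)$ as a member of PF, now treating both $z$ and $y$ as model-variables. Determinism makes the computation table unique, so for each $w\in\mathrm{Mod}_R(t)$ there is exactly one accepting tableau, hence exactly one assignment to $y$ completing $z_w$ to a satisfying assignment; thus $g(t,w):=(z_w,y_w)$, where $y_w$ is obtained by simulating $M$ on $(t,w)$ in polynomial time, is a well-defined poly-time bijection onto the satisfying assignments of $f(t)$. The main thing to get right --- and the only place the hypotheses genuinely bite --- is that $g_t$ be an \emph{exact} bijection rather than a mere correspondence: in part (1) this is automatic once the quantifier absorbs the multiplicity of accepting runs, while in part (2) it rests entirely on the uniqueness of the deterministic tableau, which is precisely why nondeterminism cannot be tolerated there.
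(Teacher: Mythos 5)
Your proposal is correct and takes essentially the same route as the paper's own proof: a Cook--Levin tableau in which the variables describing the candidate model $w$ are left free, with the remaining computation variables existentially quantified in part (1) to absorb the multiplicity of accepting runs, and pinned down uniquely by determinism in part (2). The only cosmetic difference is that you code $\Delta$-symbols in binary blocks guarded by a $\mathrm{Valid}(z)$ formula, whereas the paper keeps the tableau's own first-row cell variables $x_{1,j,s}$ as the free variables and replaces $\varphi_{start}$ by $\varphi'_{start}$ (via the formula $\alpha$) to make the construction independent of $w$ --- the same key point you identify.
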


\begin{proof}
(1) At first we have to construct a transformation from $T$ to PF.
We shall adopt the construction in the proof of Cook-Levin theorem
(see e.g. \cite{sipser}) which states that the satisfiability
problem for propositional formulas is NP-complete. Let $R'\subseteq
\Gamma^* \times \Delta^*$ be the relation defined by $R'(t,w)$ if
and only if $t\in T$, $w\in S$ and $R(t,w)$. Since $T, S$ are both
decidable in polynomial time, $R'$ is still decidable
non-deterministically in
polynomial time. 
Let $N=(\Gamma\cup\Delta, Q, \Gamma', \sigma, q_0, q_{accept},
q_{reject})$ be a non-deterministic Turing machine that decides
$R'(t,w)$ in time $(|t|+|w|)^{k_0}$ for some constant $k_0$. Please
note that we have assumed that $(T,S,R)$ is poly-size system (see
Definition 3). Then there is a polynomial $p(n)$ such that $R(t,w)$
implies $|w|=p(|t|)$. Then on input $t,w$ with $|w|=p(|t|)$, the
configurations of a branch of the computation can be represented as
an $((|t|+p(|t|))^{k_0}+1)\times ((|t|+p(|t|))^{k_0}+1)$ table. As
shown in the following figure, the first row of the table is the
starting configuration of $N$ on input $w,t$, and each row follows
the previous one according to the transition function $\sigma$.

\begin{center}
\includegraphics[width=5in,height=2.3in]{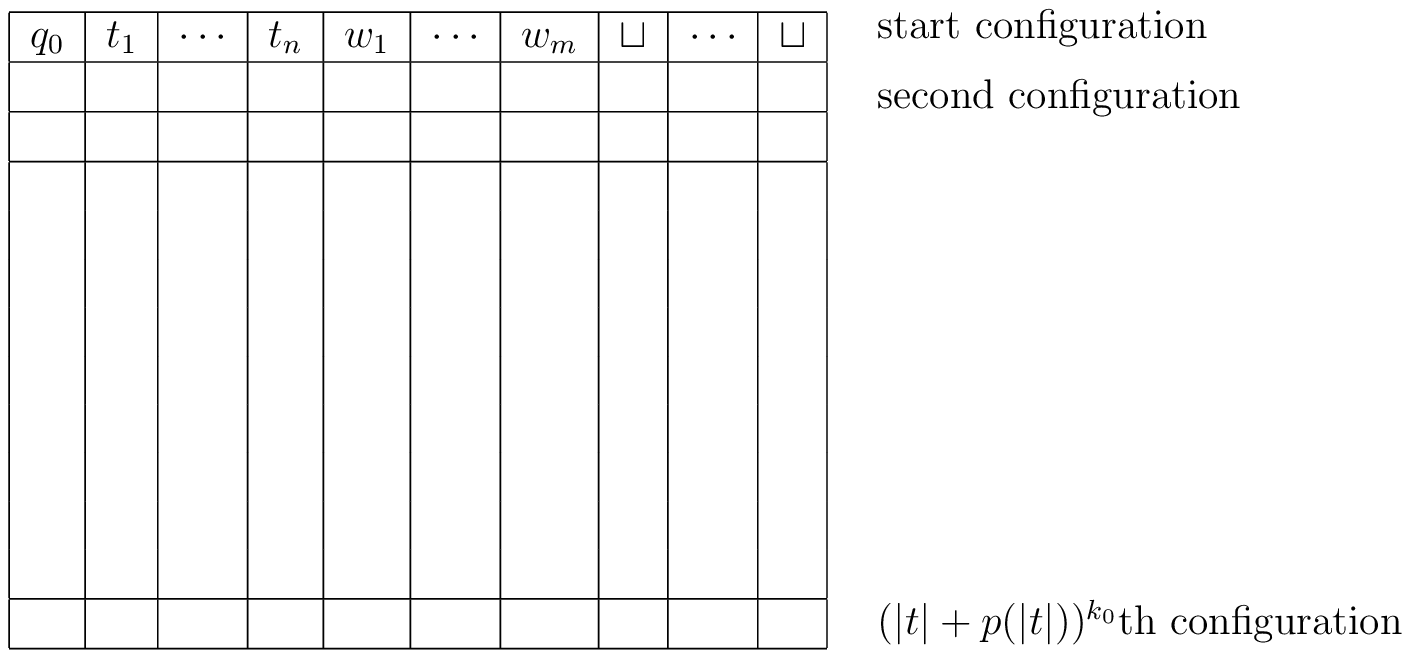}%
\remove{
\begin{tabular}{|c|c|c|c|c|c|c|c|c|c|}\hline
$q_0$ & $t_1$& $\cdots$& $t_n$ & $w_1$ &$\cdots$&$w_m$&$\sqcup$&$\cdots$&$\sqcup$\\
\hline &&&&&&&&&
\\ \hline &&&&&&&&&\\ \hline &&&&&&&&&\\ &&&&&&&&&\\ &&&&&&&&&\\ &&&&&&&&&\\ &&&&&&&&&\\&&&&&&&&&
\\ &&&&&&&&& \\  \hline &&&&&&&&&\\ \hline
\end{tabular}
}
\end{center}

For any $i, j$ with $1\leq i,j\leq (|t|+p(|t|))^{k_0}+1$ and for
each symbol $s\in \Gamma'\cup Q$, we have a propositional variable
$x_{i,j,s}$. If $x_{i,j,s}$ take the value 1, it means that the
entry (or cell) in row $i$ and column $j$ contains the symbol $s$.

In the proof of Cook-Levin theorem, for input $t,w$, four
propositional formulas $\varphi_{{cell}}$, $\varphi_{{start}}$,
$\varphi_{{move}}$, and $\varphi_{{accept}}$ are designed so that
the Turing machine accepts $w, t$ if and only if
$\varphi_{{cell}}\wedge \varphi_{{start}}\wedge
\varphi_{{move}}\wedge\varphi_{{accept}}$ is satisfiable.

$\varphi_{{start}}$ states that the first row of the table is the
starting configuration of $N$ on input $t=t_1t_2\cdots
t_n,w=w_1w_2\cdots w_{p(n)}$. More precisely,

$$\begin{array}{lll}\varphi_{{start}}&:=& x_{1,1,q_0}\wedge x_{1,2,t_1}\wedge\cdots\wedge x_{1,n+1,t_n}\wedge\\
&&x_{1,n+2,w_1}\wedge\cdots\wedge x_{1,n+p(n)+1,w_{p(n)}}\wedge\\ &&
x_{1,n+p(n)+2,\sqcup}\wedge\cdots\wedge x_{1, n',
\sqcup}.\end{array}$$ Here $n'$ is $(n+p(n))^{k_0}+1$.

We need not to write explicitly other three formulas, instead we
just explain their intuitive meaning. $\varphi_{{cell}}$ states that
each cell contains exactly one symbol. The formula
$\varphi_{{move}}$ guarantees that each row of the table corresponds
to a configuration that can be obtained from the preceding row's
configuration by applying a rule of $N$. Finally,
$\varphi_{{accept}}$ states that an accepting configuration occurs
during the computation.

Please note that the above construction of $\varphi_{{start}}$
depends on the input information $t,w$. However, our task is to
construct a mapping which transforms each $t\in T$ to an
existentially quantified formulas. That is, our construction should
not depend on $w$. For that reason, we have to modify the
formula $\varphi_{{start}}$. Please note the $w$ could be any string
in $\Gamma^*$ with length $p(|t|)$. Hence, each cell in the first
row and the $(n+1+j)$-th (with $1\leq j\leq p(n)$) column could
contain
any symbol of $\Gamma\cup\Delta$. 
This can be described as the following formula:
$$\alpha:=\left(\bigwedge_{n+2\leq i\leq n+p(n)+1}\left(\bigvee_{s\in
\Gamma\cup\Delta}x_{1,i,s}\right)\right)
%
$$
Now we define $\varphi'_{{start}}$ as
$$\varphi'_{{start}}:= x_{1,1,q_0}\wedge x_{1,2,t_1}\wedge\cdots\wedge x_{1,n+1,t_n}\wedge \alpha
\wedge x_{1,n+p(n)+2,\sqcup}\wedge\cdots\wedge x_{1, n', \sqcup}.$$

We write $G(t)$ to denote the formula $\varphi_{{cell}}\wedge
\varphi'_{{start}}\wedge \varphi_{{move}}\wedge\varphi_{{accept}}$.
Clearly, $t$ has a $R$-model if and only if $G(t)$ is satisfiable.
Please note that for a string $w$, the truth values of $x_{1,j,s}$
(with $1\leq j\leq n'$) can be uniquely determined by $t,w$ and
formulas $\varphi_{{cell}}$ and $\varphi'_{{start}}$, however, the
truth value of each of the other variables is not uniquely determined due to
the non-determinism of $N$.  Thus, the models of $t$ do not
necessarily one-to-one correspond to the truth assignments of
$G(t)$. Therefore, we add existential quantifiers $\exists
x_{i,j,s}$ in front of $G(t)$ for all $i=2,\cdots, n'$, $j=1,\cdots,
n'$ and $s\in \Gamma\cup Q$, the resulting formula is denoted as
$F(t)$. Now, it is easy to see that there is a polynomial-time
computable one-to-one correspondence between $R$-models of $t$ and
models of $F(t)$.

(2) Suppose $R$ is decidable deterministically in polynomial time.
Then, in the above construction, we can assume that $N$ is a
deterministic Turing machine deciding $R'$. Since the computation of
$N$ is uniquely determined whenever the input is fixed, we can see
that $G$ is in fact a model-equivalence reduction from $(\Gamma,
\Delta, T, S, R)$ to (${\cal L}$, $\{0,1\}$, PF, TA, Sat).
\end{proof}

\vskip 2mm Theorem 1 says that $\exists$PF is the maximal system
amongst NP-systems for which the model checking problem is in NP.
However, it is unlikely a maximal NP-system in general because there
are NP-systems for which the model checking problem is
co-NP-complete or even harder (under the assumption that the
polynomial hierarchy does not collapse).

\vskip 2mm \noindent {\bf Example 2.}
\begin{description}
\item (1) Let MinSat$(t, w)$ be the relation which says that $w$
is an encoding of a minimal model of the propositional formula with
code $t$. Here by a minimal model of a propositional formula
$\varphi$, we mean a model $M$ of $\varphi$ such that any proper
subset of $M$ is not a model of $\varphi$. Since a propositional
formula $\varphi$ has a model if and only if $\varphi$ has a minimal
model, it follows that (${\cal L}$, $\{0,1\}$, PF, TA, MinSat) is a
NP-system.

\item (2) (${\cal L}$, $\{0,1\}$, $\exists$PF, FMinSat) is a
NP-system. Here FMinsat$(t,w)$ says $w$ is an encoding of a minimal
model of the existentially quantified formula coded by $t$.
\end{description}

\begin{proposition}
\begin{description}
\item (1) The model checking problem for (PF, TA,
MiniSat) is co-NP-complete \cite{cds}.

\item (2) The model checking problem for ($\exists$PF, TA, FMiniSat) is
$D^P$-complete. Where $D^P$ is the class of decision problems which
can be described as the intersection of one NP problem and one co-NP
problem \cite{papaw}.
\end{description}
\end{proposition}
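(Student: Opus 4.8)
The plan is to split each model-checking question into a ``model-existence'' component, which is existential and hence in \NP, and a ``minimality'' component, which is universal and hence in \coNP, and then to prove hardness by reducing from the standard complete problems UNSAT and SAT--UNSAT. For part (1), membership in \coNP\ is immediate: given $(\vp,M)$, the interpretation $M$ fails to be a minimal model exactly when either $M\not\models\vp$ (checkable in polynomial time) or there is a proper submodel $M'\subsetneq M$ with $M'\models\vp$ (verifiable by guessing $M'$), so non-minimality is in \NP\ and minimality in \coNP. For \coNP-hardness I would reduce from UNSAT: given $\psi$ over $x_1,\dots,x_n$, introduce a fresh variable $z$ and set $\vp:=(z\vee\psi)\wedge\bigwedge_{i=1}^n(\neg z\vee x_i)$, with designated interpretation $M:=\{x_1,\dots,x_n,z\}$. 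The all-true assignment $M$ always satisfies $\vp$; the clauses $\neg z\vee x_i$ force any submodel containing $z$ to coincide with $M$, while any submodel omitting $z$ is precisely a satisfying assignment of $\psi$ (with $z$ false). Hence $M$ admits a proper submodel iff $\psi$ is satisfiable, so $M$ is a minimal model of $\vp$ iff $\psi$ is unsatisfiable, which is the required reduction.

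For part (2), I would first locate the problem in $D^P$. For an instance $(\Phi,w)$ with $\Phi=\exists\vec y\,\vp(\vec x,\vec y)$ and $w$ an assignment to the free variables $\vec x$, the pair lies in the model-checking relation iff $w\models\Phi$, i.e.\ $\exists\vec y\,\vp(w,\vec y)$ holds (an \NP\ condition), and no proper submodel $w'\subsetneq w$ satisfies $\Phi$, i.e.\ $\forall w'\subsetneq w\ \neg\exists\vec y\,\vp(w',\vec y)$ (whose negation is \NP, so the condition itself is \coNP). The problem is therefore the intersection of an \NP\ language and a \coNP\ language, which is exactly the definition of $D^P$.

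For $D^P$-hardness I would reduce from SAT--UNSAT: given $(\psi_1,\psi_2)$ over disjoint variable sets, decide whether $\psi_1$ is satisfiable and $\psi_2$ is unsatisfiable. I would rename the variables of $\psi_2$ to free variables $x_1,\dots,x_n$, keep the variables $\vec y$ of $\psi_1$ as the quantified variables, add a fresh free variable $z$, and set
\[
\Phi:=\exists\vec y\,\Big[\psi_1(\vec y)\wedge(z\vee\psi_2(\vec x))\wedge\textstyle\bigwedge_{i=1}^n(\neg z\vee x_i)\Big],
\]
with designated free assignment $w:=\{x_1,\dots,x_n,z\}$. Because the $\psi_1$-block is free of the $\vec x$-variables, a free assignment $v$ satisfies $\Phi$ iff $\psi_1$ is satisfiable \emph{and} $v$ satisfies the minimality gadget of part (1). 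Thus $w\models\Phi$ iff $\psi_1$ is satisfiable, and, when $\psi_1$ is satisfiable, $w$ admits a proper submodel iff $\psi_2$ is satisfiable; hence $w$ is a minimal model of $\Phi$ iff $\psi_1$ is satisfiable and $\psi_2$ is unsatisfiable. Since SAT--UNSAT is $D^P$-complete, this establishes hardness, and together with the membership argument it proves $D^P$-completeness.

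The main obstacle is this hardness construction in part (2): one must pack both a satisfiability test and an unsatisfiability test into a \emph{single} $\exists$PF instance with a \emph{single} designated interpretation, so that the model-existence half of minimal-model checking sees exactly $\psi_1$ while the minimality half sees exactly $\psi_2$, with the two tests provably non-interfering. The device that makes this go through is keeping the $\psi_1$-matrix disjoint from the $\vec x$-variables, so that the existential quantifier factors out and leaves the part (1) gadget governing minimality unchanged.
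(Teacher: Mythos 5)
Your proposal is correct, and at the top level it follows the paper: $D^P$ membership by intersecting the NP test ``$w\models\Phi$'' with the coNP test ``no proper subset of $w$ models $\Phi$,'' and $D^P$-hardness by reduction from SAT--UNSAT via a fresh variable $z$ and the disjunct $z\vee\psi_2$. The hardness constructions differ in packaging, though. You treat part (2) as ``part (1) plus an existential conjunct'': you keep $\psi_2$'s variables as free variables and then must control the exponentially many candidate proper submodels with the clauses $\bigwedge_{i}(\neg z\vee x_i)$. The paper instead exploits the quantifiers of $\exists$PF: in its formula $F=\exists X\exists Y(\varphi\wedge(z\vee\psi))$ \emph{every} variable of both input formulas is existentially quantified, so $z$ is the only free variable, the only proper subset of the designated interpretation $\{z\}$ is $\emptyset$, and minimality collapses to the single condition ``$\emptyset$ is not a model of $F$,'' i.e.\ $\varphi\wedge\psi$ is unsatisfiable --- which, by disjointness of the variable sets, amounts to ``$\psi$ is unsatisfiable'' once $\varphi$ is satisfiable. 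This eliminates both the gadget clauses and the ``non-interference'' issue you single out as the main obstacle; the factoring you work to arrange is automatic when $\psi_2$'s variables are quantified away. Your extra machinery is not wasted, however: in part (1) there are no quantifiers available to hide variables behind, so a submodel-control gadget of your kind is genuinely needed there, and you thereby give a self-contained proof of the co-NP-hardness in (1), which the paper only cites from Cadoli's book \cite{cds}.
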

\begin{proof} (1) Please see page 48-49 in \cite{cds}.

(2) At first we show the membership. Consider an arbitrary
$\exists$PF formula $\Phi$. Suppose the set of free (i.e. not
quantified) variables in $\Phi$ is $Z=:\{z_1,\cdots,z_n\}$, and the
set of bounded variables in $\Phi$ is $X:=\{x_1,\cdots,x_m\}$. For
simplicity we write $\Phi$ as $\exists X\varphi(X,Z)$. It is not
hard to see that a subset $M\subseteq Z$ of is a minimal model of
$\Phi$ if and only if $M$ is a model of the following formula.
$$\exists X\varphi(Z,X) \wedge \forall Z'(\neg (Z'\rightarrow Z)\vee
\forall X\neg \varphi(Z',X)\vee (Z'=Z)).$$
Where $Z'=\{z'\mid z\in Z\}$ is a set of new variables;
$Z'\rightarrow Z$ is abbreviated for the formula $\bigwedge_{z\in
Z}(z'\rightarrow z)$; $Z'=Z$ denotes the formula $(Z'\rightarrow
Z)\wedge (Z\rightarrow Z')$; and $\varphi(Z',X)$ is the formula
obtained from $\varphi(Z,X)$ by replacing each occurrence of
$z$ by $z'$. It follows obviously that minimal model checking
problem for $\exists$PF is in $D^P$.

Next we show the hardness. The canonical $D^P$-complete problem is
the SAT-UNSAT problem (see \cite{papaw}) of determining for a pair
$(\varphi, \psi)$ of propositional formulas, whether $\varphi$ is
satisfiable and $\psi$ is unsatisfiable. Let $X,Y$ be the sets of
variables in $\varphi$ and $\psi$, respectively. We assume w.o.l.g.
that $X\cap Y=\emptyset$. Let $z$ be a new variable. Consider the
following formula
$$F=\exists X\exists Y(\varphi\wedge(z\vee\psi)).$$
It is not hard to see that $(\varphi,\psi)\in$ SAT-UNSAT if and only if
$\{z\}$ is a minimal model of $F$. The proof completes.
\end{proof}

\begin{theorem}
Suppose co-NP $\not=$ NP. Then (PF, TA, MinSat) and $(\exists$PF,
TA, FSat) are pairwise incomparable with respect to poly-time
model-equivalence reduction.
\end{theorem}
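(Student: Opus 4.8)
The goal is to establish the two non-reductions (PF, TA, MinSat) $\not\preceq_{ptime}$ $(\exists$PF, TA, FSat) and $(\exists$PF, TA, FSat) $\not\preceq_{ptime}$ (PF, TA, MinSat). My plan rests on a single \emph{model-checking transfer principle}. Suppose $(T_1,S_1,R_1)\preceq_{ptime}(T_2,S_2,R_2)$ via maps $f,g$ as in Definition~\ref{d-2}. For a theory $t$ and a candidate $w$, if $R_1(t,w)$ holds then $w\in\mathrm{Mod}_{R_1}(t)$, so $g(t,w)=g_t(w)\in\mathrm{Mod}_{R_2}(f(t))$, i.e.\ $R_2(f(t),g(t,w))$; and since $g_t$ is onto $\mathrm{Mod}_{R_2}(f(t))$, every target model is of this form. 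Granting the two-sided equivalence $R_1(t,w)\Leftrightarrow R_2(f(t),g(t,w))$, the poly-time map $(t,w)\mapsto(f(t),g(t,w))$ is a many-one reduction from the model-checking problem of the source to that of the target. Consequently the model-checking problem of the source lies in whatever complexity class contains the model-checking problem of the target.

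First I would treat the direction $(\exists$PF, TA, FSat) $\preceq_{ptime}$ (PF, TA, MinSat) and derive a contradiction. The model-checking relation FSat is \NP-complete (a plain SAT instance $\varphi$ reduces to checking the single free-variable model of a suitable $\exists X\,\varphi$), while by the Proposition the relation MinSat is $\coNP$-complete. The transfer principle would place the \NP-complete problem FSat inside $\coNP$, yielding $\NP\subseteq\coNP$ and hence $\NP=\coNP$, contradicting the hypothesis.

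The reverse direction (PF, TA, MinSat) $\preceq_{ptime}$ $(\exists$PF, TA, FSat) is handled symmetrically. Here the source relation MinSat is $\coNP$-complete and the target relation FSat is in \NP, so the transfer principle would place a $\coNP$-complete problem inside \NP, giving $\coNP\subseteq\NP$ and again $\NP=\coNP$. Combining the two cases shows that neither system reduces to the other, which is exactly incomparability with respect to $\preceq_{ptime}$.

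The main obstacle is justifying the \emph{converse} half of the transfer equivalence. The forward implication $R_1(t,w)\Rightarrow R_2(f(t),g(t,w))$ is immediate, but the converse can fail: Definition~\ref{d-2} only guarantees that $g_t$ is a bijection \emph{on} $\mathrm{Mod}_{R_1}(t)$, so the total function $g$ may send some non-model $w$ into $\mathrm{Mod}_{R_2}(f(t))$, producing false positives that break the many-one reduction. I would close this gap by normalizing the reduction to be \emph{model-reflecting}, arranging $g_t$ to be injective on all of $S_1$ and to map non-models to non-models, using the room available in the large target assignment space $S_2$; alternatively, I would exploit the exact cardinality equality $|\mathrm{Mod}_{R_1}(t)|=|\mathrm{Mod}_{R_2}(f(t))|$ forced by the bijection, together with surjectivity of $g_t$, to argue that any spurious image would collide with the image of a genuine model and thus violate injectivity of $g_t$ on $\mathrm{Mod}_{R_1}(t)$. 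Securing this faithfulness is the delicate step; once it is in place, both complexity collapses above follow at once and the theorem is proved under $\NP\neq\coNP$.
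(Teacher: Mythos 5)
Your core argument is precisely the paper's own proof. The paper also derives NP $=$ co-NP by composing the hypothesized reduction $(f,g)$ with model checking in the target system: given $\varphi$ and $M$, compute $\Phi=f(\varphi)$ and $M'=g(\varphi,M)$, and decide whether $M$ is a minimal model of $\varphi$ by checking whether $M'$ is a model of $\Phi$ (an NP question); the other direction is symmetric. The ingredients are identical too: co-NP-completeness of MinSat (Proposition 1) and NP-completeness of FSat.

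Where you go beyond the paper is in flagging the converse half of the transfer, namely that $R_2(f(t),g(t,w))$ must imply $R_1(t,w)$. You are right that Definition 2 does not grant this: $g_t$ is only required to be a bijection from $\mathrm{Mod}_{R_1}(t)$ onto $\mathrm{Mod}_{R_2}(f(t))$ and is unconstrained on non-models, so false positives are formally possible. The paper simply asserts the equivalence and never addresses this point, so on the shared skeleton your write-up is at least as careful as the paper's. However, neither of your proposed repairs closes the gap. The cardinality/collision argument fails outright: if a non-model $w$ satisfies $g_t(w)\in \mathrm{Mod}_{R_2}(f(t))$, surjectivity indeed yields a genuine model $w'$ with $g_t(w')=g_t(w)$, but this collision contradicts nothing, because injectivity of $g_t$ is demanded only on $\mathrm{Mod}_{R_1}(t)$ and $w$ lies outside that set. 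The normalization idea begs the question: to reroute non-models to non-models of $f(t)$, or to certify that $g_t$ is ``model-reflecting,'' you would have to recognize which inputs are non-models --- for MinSat that is exactly the co-NP-hard problem under discussion, and recognizing non-models of the $\exists$PF image is co-NP-hard as well --- and no construction is given that produces such a normalized reduction from an arbitrary one. So the step you correctly single out as delicate remains unproven in your proposal (and is silently elided in the paper); as it stands, your argument establishes the theorem only for reductions that happen to be model-reflecting.
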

\begin{proof}
The theorem follows from the following fact. The the minimal model
checking problem 
for propositional formulas is co-NP-complete (see Proposition 1),
whereas the model checking problem for existentially quantified
propositional formulas is NP-complete \cite{zhkb}. Suppose for
example (PF, TA, MinSat) $\preceq_{ptime}$ $(\exists$PF, TA, FSat).
Then for a truth assignment $M$ and a propositional formula
$\varphi$, to check that $M$ is a minimal model of $\varphi$, we
first transform $\varphi$ in poly-time into a $\exists$PF-formula
$\Phi$, and compute $M'$ from $M$ by using the poly-time computable
one-to-one correspondence, then check that $M'$ is a model of
$\Phi$, which is a NP problem. It follows that NP=co-NP, contradicts
the assumption of the theorem.
\end{proof}

\vskip 2mm NP$\not\subseteq$P/poly is an important conjecture in
computational complexity theory (see e.g. \cite{papa}). In fact we
even do not know whether NP$\not\subseteq$co-NP/poly is true or
false. However, the following theorem shows that if (PF, TA, MinSat)
and $(\exists$PF, TA, FSat) are comparable with respect to
poly-space model-equivalent reduction then NP$\subseteq$co-NP/poly.

\begin{theorem}
Suppose NP$\not\subseteq $co-NP/poly. Then (PF, TA, MinSat) and
$(\exists$PF, TA, FSat) are pairwise incomparable with respect to
poly-space model-equivalent reduction.
\end{theorem}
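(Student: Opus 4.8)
The plan is to treat the two reduction directions separately but symmetrically, in each case combining the completeness of the relevant model-checking problem with the poly-time model correspondence $g$ to place a hard problem into an advice class. I will use the equivalence that $\NP\subseteq\coNP/\mbox{poly}$ holds iff $\coNP\subseteq\NP/\mbox{poly}$ (complementation swaps the quantifier and leaves the advice untouched), together with the two completeness facts already recorded: minimal-model checking for PF is $\coNP$-complete by Proposition 1(1), whereas model checking for $\exists$PF is $\NP$-complete by \cite{zhkb}.

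For the direction (PF, TA, MinSat) $\preceq_{pspace}$ ($\exists$PF, TA, FSat), suppose the reduction is given by a space-$p(n)$ computable $f$ and a time-$q(n)$ computable $g$. To decide, on input $(\varphi,M)$, whether $M$ is a minimal model of $\varphi$, I would first compute $M':=g(\varphi,M)$ in polynomial time and then verify that $M'$ is a model of $\Phi:=f(\varphi)$; by the defining model-correspondence property, $M$ is a minimal model of $\varphi$ exactly when $M'$ is a model of $\Phi$ (just as in the proof of Theorem 2). Since ``$M'$ is a model of $\Phi$'' is an $\NP$ predicate in the size of $(\Phi,M')$, the whole test is nondeterministic polynomial time once $\Phi$ is available. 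The only step that is not poly-time is the space-bounded map $f$, and the idea is to absorb it into nonuniform advice: if, for each length $n$, polynomially much advice suffices to carry out (or certify) the transformation $\varphi\mapsto f(\varphi)$ in polynomial time, then MinSat-checking lies in $\NP/\mbox{poly}$, whence $\coNP\subseteq\NP/\mbox{poly}$ and therefore $\NP\subseteq\coNP/\mbox{poly}$, contradicting the hypothesis.

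The reverse direction ($\exists$PF, TA, FSat) $\preceq_{pspace}$ (PF, TA, MinSat) is handled the same way with the roles exchanged: to check whether $M'$ is a model of a given $\exists$PF-formula $\Phi$, I would compute $M:=g(\Phi,M')$ in polynomial time and then test, using the ($\coNP$) minimal-model predicate, whether $M$ is a minimal model of $\varphi:=f(\Phi)$, again reading $f$ from polynomial advice. This places the $\NP$-complete problem FSat-checking in $\coNP/\mbox{poly}$, yielding $\NP\subseteq\coNP/\mbox{poly}$ directly and again contradicting the hypothesis.

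The hard part is exactly this advice step, and it is here that the argument genuinely departs from Theorem 2. One cannot treat the reduction as a black box: a bare poly-space many-one reduction from a $\coNP$-complete problem to an $\NP$ problem always exists (decide the source in polynomial space and emit a fixed instance), so it is the poly-time model correspondence $g$ that carries the content and rules out this trivialization. The delicate issue is that $\Phi=f(\varphi)$ may be long, so I cannot simply write it down; instead I must show that for each input length a polynomial amount of nonuniform advice suffices to evaluate the target formula on the single relevant polynomial-size assignment $M'$ (equivalently, to simulate $f$ in polynomial time on that length), and that such advice can be chosen of polynomial size uniformly over all $\varphi$ of that length. Making this nonuniform simulation of the space-bounded transformation precise is the crux on which the whole theorem turns.
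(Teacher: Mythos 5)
Your reduction of the two directions to advice-class containments has the right high-level shape, but the step you yourself flag as the crux is not a deferred technicality: it is a genuine gap, and as formulated it cannot be closed. Your plan applies the poly-space map $f$ to \emph{every} instance $\varphi$ (respectively every $\Phi$), and therefore needs, for each length $n$, a polynomial-size advice string from which a polynomial-time machine can recover or certify $f(\varphi)$ for each of the exponentially many $\varphi$ of length $n$. Nothing in the hypothesis NP$\not\subseteq$co-NP/poly (nor in any standard conjecture) allows a polynomial-space-computable transformation to be simulated in P/poly; moreover $f(\varphi)$ may have super-polynomial length, and there are $2^{\Theta(n)}$ distinct values $f(\varphi)$ to encode, so no polynomial advice string can carry this information as a black box. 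Hence ``absorb $f$ into the advice'' is precisely where the proof fails, not a detail to be made precise later.

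The paper's proof sidesteps this by never applying $f$ to instances at all. The missing idea is a \emph{universal formula} per length: with $\pi(n)$ the set of all 3-clauses over $x_1,\dots,x_n$ and a fresh variable $z_c$ for each $c\in\pi(n)$, define $\Psi_n:=\exists x_1\cdots\exists x_n \bigwedge_{c\in\pi(n)}(c\vee\neg z_c)$; then a 3CNF formula $\varphi=c_1\wedge\cdots\wedge c_k$ of length $n$ is satisfiable if and only if $M_\varphi=\{z_{c_1},\dots,z_{c_k}\}$ is a model of $\Psi_n$, so the NP-complete problem SAT is encoded in the \emph{models} of the single formula $\Psi_n$. (A dual propositional formula $\psi_n$, built with an extra variable $y$ and complementary variables $z'_c$, encodes UNSAT in the \emph{minimal models} of one formula.) Now a hypothetical reduction $(\exists$PF, TA, FSat$)\preceq_{pspace}$ (PF, TA, MinSat) need only be evaluated on $\Psi_n$ --- once per length --- and $f(\Psi_n)$ itself serves as the polynomial advice; at query time the machine computes $M_\varphi$ and $g(\Psi_n,M_\varphi)$ in polynomial time and performs a single co-NP minimal-model check, yielding NP$\subseteq$co-NP/poly. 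Symmetrically, $f(\psi_n)$ is the advice refuting (PF, TA, MinSat) $\preceq_{pspace}$ $(\exists$PF, TA, FSat). This one-formula-per-length device is exactly what makes the advice polynomial, and it is what your per-instance use of the completeness of model checking cannot provide.
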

\begin{proof}
We first show $(\exists$PF, TA, FSat)$\not\preceq_{pspace}$(PF, TA,
MinSat).
Let $\Gamma_n$ be the set of all 3CNF formulas $\varphi$ such that
variables $|\varphi|=n$ and $var(\varphi)\subseteq\{x_1,\cdots,x_n\}$, where
$var(\varphi)$ is the set of all variables occurring in $\varphi$.
Define $\Gamma:=\bigcup_{n>0}\Gamma_n$. Clearly, the satisfiability
problem for $\Gamma$ is NP-complete.

Let $\pi(n)$ be the set of 3-clauses over $x_1,\cdots,x_n$. For each
3-clause $c\in \pi(n)$ introduce a new variable $z_c$. Define
$$\Psi_n:=\exists x_1\cdots\exists x_n \left(\bigwedge_{c\in\pi(n)} (c\vee
\neg z_c)\right)$$

Let $\varphi$ be a 3CNF formula with $|\varphi|=n$. W.l.o.g. we can
assume $\varphi\in \Gamma_n$. Suppose $\varphi=c_1\wedge\cdots\wedge
c_k$. Define $M_{\varphi}=\{z_{c_1},\cdots,z_{c_k}\}$, that is, we
set each $z_{c_i}$ to 1, and all other $z_c$ to 0. Clearly,
\begin{itemize}
\item $M_{\varphi}$ can be computed in polynomial time, and
\item $\varphi$ is satisfiable if and only if $M_{\varphi}$ is a
model of $\Psi_n$.
\end{itemize}

Suppose $(\exists$PF, TA, FSat)$\preceq_{pspace}$(PF, TA, MinSat).
Then there is a sequence $\psi_1,\psi_2,\cdots, \psi_n,\cdots$ of
propositional formulas such that
\begin{itemize}
\item  the size of each $\psi_n$ is bounded by a polynomial, and

\item for each $n$, there is a polynomial-time computable
one-to-one correspondence between the models of $\Psi_n$ and
$\psi_n$.
\end{itemize}
Then we define an advice-taking Turing machine \footnote{For a
precise definition of advise-taking Turing machine please see
\cite{papa}.} in the following way. The advice oracle is $\psi_n$.
Given an instance $\varphi$ of $\Gamma$ with $|\varphi|=n$, the machine
loads $\psi_n$, then computes $M_{\varphi}$ in polynomial time in
$n$, then computes $M'_{\varphi}$ according to the one-to-one
correspondence, finally checks whether $M'_{\varphi}$ is a minimal
model of $\psi_n$. Please note that the minimal model checking
problem for propositional formulas is in co-NP. Since the
satisfiability problem for $\Gamma$ is NP-complete, it follows that
NP$\subseteq$co-NP/poly.

(2) Next we show (PF, TA, MinSat) $\not\preceq_{pspace}$
($\exists$PF, TA, FSat). Suppose $\Gamma_n,\ \Gamma,\ \pi(n),\ z_c$
are defined as before. Now for each $c\in\pi(n)$ we introduce
another new variable $z'_c$ for each $c\in\pi(n)$, and a new
variable $y$ in addition. Define
$$\psi_n:=\left(\left(\neg y\wedge\bigwedge_{c\in\pi(n)}(c\vee\neg z_c)\right)\vee
(y\wedge x_1\wedge\cdots\wedge x_n) \right)\wedge
\left(\bigwedge_{c\in\pi(n)}(z_c\leftrightarrow \neg
z'_c)\right).$$

Let $\varphi$ be a 3CNF formula with $|\varphi|=n$. W.l.o.g. we
can assume $\varphi\in \Gamma_n$. Suppose
$\varphi=c_1\wedge\cdots\wedge c_k$. Define
$$M_{\varphi}=\{y,x_1,\cdots,x_n\}\cup\{z_{c_1},\cdots,z_{c_k}\}\cup\{z'_c\mid
c\not\in\varphi\}.$$
It is not hard to see that
\begin{itemize}
\item $M_{\varphi}$ can be computed in polynomial time, and

\item $\varphi$ is unsatisfiable if and only if $M_{\varphi}$ is a
minimal model of $\Psi_n$.
\end{itemize}

Suppose (PF, TA, MinSat) $\preceq_{pspace}$ $(\exists$PF, TA, FSat).
Then there is a sequence $\Psi_1,\Psi_2,\cdots, \Psi_n,\cdots$ of
$\exists$PF formulas such that
\begin{itemize}
\item  the size of each $\Psi_n$ is bounded by a polynomial, and

\item for each $n$, there is a polynomial-time computable
one-to-one correspondence between the models of $\psi_n$ and
$\Psi_n$.
\end{itemize}
Then we define an advice-taking Turing machine in the following way.
The advice oracle is $\Psi_n$. Given an instance $\varphi$ of $\Gamma$
with $|\varphi|=n$, the machine loads $\Psi_n$, then computes
$M_{\varphi}$ in polynomial time in $n$, then computes
$M'_{\varphi}$ from $M_{\varphi}$ according to the one-to-one
correspondence, finally checks whether $M'_{\varphi}$ is a model of
$\Psi_n$. Please note that the model checking problem for
$\exists$PF formulas is in NP. Since the unsatisfiability problem
for $\Gamma$ is co-NP-complete, it follows that NP$\subseteq$co-NP/poly.
\end{proof}

\begin{lemma}
\begin{description}
\item (1) (PF, TA, MinSat) $\preceq_{ptime}$ ($\exists$PF, TA,
FMinSat).

\item (2) ($\exists$PF, TA, FSat) $\preceq_{ptime}$ ($\exists$PF, TA,
FMinSat).
\end{description}
\end{lemma}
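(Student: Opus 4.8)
The plan is to treat the two parts separately: part (1) is essentially a syntactic observation, whereas part (2) needs one small gadget. For part (1), I would use the trivial embedding. By the definition of $\exists$PF (with $m=0$ existential quantifiers and free variables allowed), every PF formula $\varphi$ already \emph{is} an $\exists$PF formula, and in that degenerate case all of its variables are free. For such a formula, ``minimal model under MinSat'' and ``minimal model under FMinSat'' are literally the same condition, because FMinSat minimizes over the free variables and here the free variables are all the variables of $\varphi$. So I would set $f(\varphi):=\varphi$ and take $g(\varphi,v):=v$ to be the identity. Both are computable in linear time, and for each fixed $\varphi$ the map $g_\varphi$ is the identity on $\mathrm{Mod}_{\mathrm{MinSat}}(\varphi)=\mathrm{Mod}_{\mathrm{FMinSat}}(\varphi)$, hence a bijection. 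This gives $\preceq_{ptime}$ (indeed an equivalent reduction).

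For part (2), the task is to convert the \emph{all}-models relation FSat into the \emph{minimal}-models relation FMinSat, so the idea is to pad every model so that it becomes minimal automatically. Given $\Phi=\exists X\,\varphi(X,Z)$ with free variables $Z=\{z_1,\dots,z_n\}$, I would introduce fresh complementary variables $Z'=\{z_1',\dots,z_n'\}$ and define
$$f(\Phi):=\Phi':=\exists X\left(\varphi(X,Z)\wedge\bigwedge_{i=1}^n(z_i\leftrightarrow\neg z_i')\right),$$
whose free variables are $Z\cup Z'$. The correspondence $g$ would send a model $M\subseteq Z$ of $\Phi$ to $g(\Phi,M):=M\cup\{z_i'\mid z_i\notin M\}$, which merely records the forced values of the $z_i'$; restriction to $Z$ is its inverse, so $g$ is a poly-time computable bijection onto $\mathrm{Mod}_{\mathrm{FSat}}(\Phi')$.

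The crux — and the only step needing care — is to show that every model of $\Phi'$ is already minimal, so that $\mathrm{Mod}_{\mathrm{FMinSat}}(\Phi')=\mathrm{Mod}_{\mathrm{FSat}}(\Phi')$ and $g$ lands exactly in the minimal models. Here I would argue that the clause $z_i\leftrightarrow\neg z_i'$ forces exactly one of $z_i,z_i'$ to be true in any model, so every model of $\Phi'$ has exactly $n$ true variables among $Z\cup Z'$. Two distinct models therefore have equal cardinality, so neither can be a proper subset of the other, and every model is minimal. This cardinality-balancing device is exactly the one already used in the proof of Theorem 3(2), so it fits the paper's framework cleanly. Combining the bijection with this minimality fact yields the required poly-time model-equivalent reduction.

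I expect no serious obstacle. Part (1) is immediate once one notices $\mathrm{PF}\subseteq\exists\mathrm{PF}$ via the empty prefix. For part (2) the one thing to verify carefully is that the complementary constraints genuinely equalize the true-variable count of all models of $\Phi'$, since that equalization is precisely what forces every model to be minimal and guarantees that $g$ is a bijection onto $\mathrm{Mod}_{\mathrm{FMinSat}}(f(\Phi))$.
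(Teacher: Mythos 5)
Your proposal is correct and matches the paper's own proof in both parts: the trivial sub-system embedding for (1), and for (2) the same complementary-variable gadget (the paper writes $z_i\leftrightarrow\neg z_i'$ in its CNF form $(x_i\vee x_i')\wedge(\neg x_i\vee\neg x_i')$) with the same bijection $M\mapsto M\cup\{x_i'\mid x_i\notin M\}$. Your cardinality argument simply spells out the "clearly" step the paper leaves implicit, namely that all models of $\Phi'$ have equal size and hence are all minimal.
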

\begin{proof}
(1) Directly follows from the fact that (PF, TA, MinSat) is a
sub-system of ($\exists$PF, TA, FMinSat).

(2) Consider any formula $\Phi=\exists y_1\cdots\exists y_m \varphi$
with free variables $x_1,\cdots, x_n$. Now we introduce for each
$x_i$ ($i=1,\cdots,n$) a new variable $x'_i$ which is intended to
stand for $\neg x_i$. Define $\varphi':=\varphi\wedge \bigwedge
\left( (x_i\vee x'_i)\wedge (\neg x_i\vee \neg x'_i)    \right)$ and
let $\Phi':=\exists y_1\cdots\exists y_n\varphi'$. Clearly a subset
$M\subseteq \{x_1,\cdots, x_n\}$ is a model of $\Phi$ if and only if
$M\cup\{x'_i\mid x_i\not\in M\}$ is a minimal model of $\Phi'$.
\end{proof}

\begin{corollary} Suppose co-NP $\not=$ NP. Then
\begin{description}
\item (1) ($\exists$PF, TA, FMiniSat) $\not\preceq_{ptime}$ (PF, TA,
MinSat).

\item (2) ($\exists$PF, TA, FMinSat) $\not\preceq_{ptime}$ ($\exists$PF,
TA, FSat).
\end{description}
\end{corollary}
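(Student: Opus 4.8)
The plan is to derive both non-reductions by chaining the positive reductions of Lemma 2 against the incomparability established in Theorem 2, using the transitivity of $\preceq_{ptime}$ noted after Definition 2, and arguing by contradiction. Recall that Theorem 2 (under $\coNP\neq\NP$) gives the two non-reductions $($PF, TA, MinSat$)\not\preceq_{ptime}(\exists$PF, TA, FSat$)$ and $(\exists$PF, TA, FSat$)\not\preceq_{ptime}($PF, TA, MinSat$)$, while Lemma 2 supplies the two reductions of the smaller systems \emph{into} $(\exists$PF, TA, FMinSat$)$. The idea is that if $(\exists$PF, TA, FMinSat$)$ could itself be reduced to either of the two smaller systems, then precomposing with the appropriate half of Lemma 2 would manufacture exactly one of the reductions forbidden by Theorem 2.

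For part (1) I would assume toward a contradiction that $(\exists$PF, TA, FMinSat$)\preceq_{ptime}($PF, TA, MinSat$)$. Lemma 2(2) gives $(\exists$PF, TA, FSat$)\preceq_{ptime}(\exists$PF, TA, FMinSat$)$, so transitivity yields $(\exists$PF, TA, FSat$)\preceq_{ptime}($PF, TA, MinSat$)$, contradicting the incomparability of Theorem 2. For part (2) I would assume $(\exists$PF, TA, FMinSat$)\preceq_{ptime}(\exists$PF, TA, FSat$)$; combining with Lemma 2(1), namely $($PF, TA, MinSat$)\preceq_{ptime}(\exists$PF, TA, FMinSat$)$, transitivity produces $($PF, TA, MinSat$)\preceq_{ptime}(\exists$PF, TA, FSat$)$, which again contradicts Theorem 2.

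The only point needing care is the bookkeeping: one must pair each assumed non-reduction with the correct half of Lemma 2 so that the resulting composite lands on exactly one of the two forbidden directions of Theorem 2, and the two parts use opposite halves (part (1) uses Lemma 2(2) and contradicts one direction, part (2) uses Lemma 2(1) and contradicts the other). Beyond this matching there is no genuine obstacle, since transitivity of $\preceq_{ptime}$ and Theorem 2 are already in hand. I would also remark that the same conclusions can be obtained directly from the model-checking complexities of Proposition 1 — a reduction $(\exists$PF, TA, FMinSat$)\preceq_{ptime}($PF, TA, MinSat$)$ would transfer the $\DP$-complete problem FMinSat-checking into $\coNP$, forcing $\NP=\coNP$ — but the route through Theorem 2 and Lemma 2 is shorter and avoids re-verifying how model-checking complexity transfers along a model-equivalent reduction.
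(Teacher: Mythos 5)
Your proposal is correct and follows essentially the same route as the paper's own proof: assume either reduction into the smaller systems, compose with the appropriate half of Lemma 2 via transitivity of $\preceq_{ptime}$, and contradict the incomparability of Theorem 2. The paper states this in one compressed sentence, while you spell out the case-by-case bookkeeping (part (1) with Lemma 2(2), part (2) with Lemma 2(1)), which matches the intended argument exactly.
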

\begin{proof} If ($\exists$PF, TA, FMiniSat) $\preceq_{ptime}$ (PF, TA, MinSat)
or ($\exists$PF, TA, FMinSat) $\preceq_{ptime}$ ($\exists$PF, TA,
FSat), then we have by Lemma 2 that (PF, TA, MinSat) and
($\exists$PF, TA, FSat) are comparable w.r.t. poly-time
model-equivalent reduction. This contradicts Theorem 2.
\end{proof}

\begin{corollary}
Suppose NP$\not\subseteq $co-NP/poly. Then
\begin{description}
\item (1) ($\exists$PF, TA, FMiniSat) $\not\preceq_{pspace}$ (PF, TA,
MinSat).

\item (2)($\exists$PF, TA, FMinSat) $\not\preceq_{pspace}$ ($\exists$PF,
TA, FSat). \end{description}%
\end{corollary}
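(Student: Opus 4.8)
The plan is to mirror the short argument used for Corollary 1, replacing $\preceq_{ptime}$ by $\preceq_{pspace}$ throughout and invoking Theorem 3 in place of Theorem 2. Two facts recorded after Definition 2 do all the work: first, every poly-time model-equivalent reduction is in particular a poly-space model-equivalent reduction, so the two reductions supplied by Lemma 2 are also $\preceq_{pspace}$-reductions; and second, $\preceq_{pspace}$ is transitive. Given these, each part will follow by composing the assumed reduction with the appropriate half of Lemma 2 and landing on a reduction that Theorem 3 forbids under the hypothesis NP$\not\subseteq$co-NP/poly.

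For part (1), I would argue by contradiction: suppose ($\exists$PF, TA, FMinSat) $\preceq_{pspace}$ (PF, TA, MinSat). By Lemma 2(2) we have ($\exists$PF, TA, FSat) $\preceq_{ptime}$ ($\exists$PF, TA, FMinSat), hence also $\preceq_{pspace}$. Chaining these by transitivity yields ($\exists$PF, TA, FSat) $\preceq_{pspace}$ (PF, TA, MinSat), which directly contradicts the incomparability asserted by Theorem 3.

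For part (2), again assume toward a contradiction that ($\exists$PF, TA, FMinSat) $\preceq_{pspace}$ ($\exists$PF, TA, FSat). By Lemma 2(1), (PF, TA, MinSat) $\preceq_{ptime}$ ($\exists$PF, TA, FMinSat) and therefore $\preceq_{pspace}$. Composing the two gives (PF, TA, MinSat) $\preceq_{pspace}$ ($\exists$PF, TA, FSat), again contradicting Theorem 3.

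I do not expect any genuine obstacle here: the substantive content is entirely bundled into Theorem 3, whose proof already carries out the delicate advice-taking separation. The only point requiring care is the direction of the two Lemma 2 reductions: since FMinSat sits above both MinSat and FSat, one must pair part (1) of the corollary with Lemma 2(2) and part (2) with Lemma 2(1), and not the reverse. Once the directions are matched correctly, the conclusion is immediate from transitivity of $\preceq_{pspace}$.
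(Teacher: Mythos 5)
Your proposal is correct and matches the paper's own argument: the paper likewise assumes one of the two $\preceq_{pspace}$ reductions, combines it with Lemma 2 (using that poly-time reductions are poly-space and that $\preceq_{pspace}$ is transitive) to conclude that (PF, TA, MinSat) and ($\exists$PF, TA, FSat) would be comparable under $\preceq_{pspace}$, contradicting Theorem 3. Your version merely spells out the pairing of each part with the appropriate half of Lemma 2, which the paper leaves implicit.
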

\begin{proof}
If ($\exists$PF, TA, FMiniSat) $\preceq_{pspace}$ (PF, TA, MinSat)
or ($\exists$PF, TA, FMinSat) $\preceq_{pspace}$ ($\exists$PF, TA,
FSat), then we have by Lemma 2 that (PF, TA, MinSat) and
($\exists$PF, TA, FSat) are comparable w.r.t. poly-space
model-equivalent reduction. This contradicts Theorem 3.
\end{proof}

\section{Conclusion and Future work}

We have proved that w.r.t. poly-time model-equivalent reduction
($\exists$PF, TA, FSat) has the strongest expressive power among NP
systems with model checking problem in NP, whereas (PF, TA, Sat) is
strongest among NP systems with model checking problem in P.
However, ($\exists$PF, TA, FSat) is unlikely the strongest NP
system, because it have been shown that ($\exists$PF, TA, FSat) $\preceq_{ptime}
(\exists$PF, TA, MinSat) but the converse in not true under the
assumption NP$\not\subseteq$ co-NP. We conjecture that there is no
strongest NP-system under some conjecture in computational
complexity.

{}
\end{document}